\renewcommand{\=}{:=}
\newcommand{\be}{\beta}
\newcommand{\p}{\partial}
\newcommand{\pT}{\partial^{T}}
\newcommand{\ZZ}{\mathbb{C}}
\newtheorem{thm}{Theorem}[section]
\numberwithin{figure}{section}
\numberwithin{equation}{section}
\begin{document}

\allowdisplaybreaks

\title{\Large\bf Note on star-triangle equivalence\\in conducting networks}
\author{E Paal and M Umbleja}
\date{}

\maketitle

\begin{abstract} 
By using the discrete Poisson equations the star-triangle (external) equivalence in conducting networks is considered and the Kennelly famous transformation formulae [Kennelly A E 1899 Electrical World and Engineer {\bf34} 413] are explicitly restated.
\end{abstract}


\thispagestyle{empty}

\section{Introduction and outline of the paper} 

The homological representation and modeling \cite{Roth55} of networks (n/w) is based on their geometric elements, called also the chains -- nodes, branches (edges), meshes (simple closed loops), and using the natural geometric boundary operator of the n/w which only depends on the geometry (topology) of the n/w. Then, both of the Kirchhoff laws can be presented in a compact algebraic form that may be called the homological Kirchhoff Laws.

In the present note, we compose the discrete Poisson equations and consider the star-triangle (external) equivalence transformation in conducting networks, see Fig. \ref{fig_star_triangle}, and prove the Kennelly famous transformation formulae \cite{Kennelly}. We use the geometrical representation that was explained in  \cite{JPCS}.
\begin{figure}[h]
\begin{center}
\includegraphics[height=40mm]{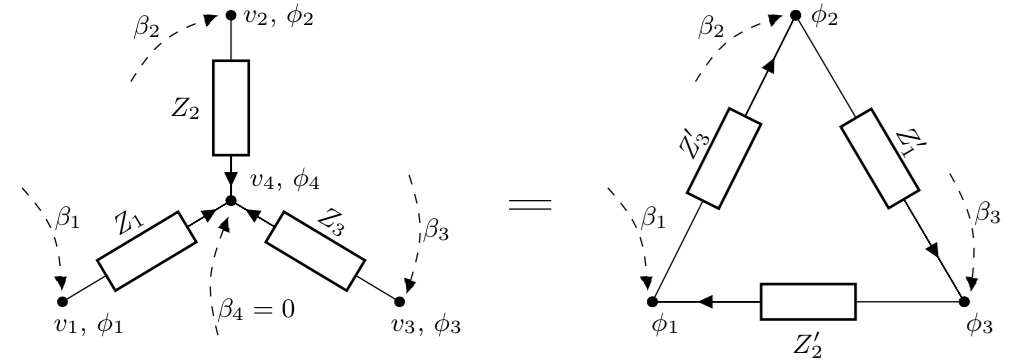}
\caption{Star-triangle (external/boundary) equivalence\label{fig_star_triangle}}
\end{center}
\end{figure}

First introduce some notations, here we follow \cite{JPCS}. 
The both circuits are assumed to have the same boundary conditions. We denote 
\begin{subequations}
\begin{align}
\ket{\be}
&\=
\ket{\be_1 \be_2 \be_3 \be_4}
,\quad
\ket{\be}'
\=
\ket{\be_1 \be_2 \be_3}
\quad \text{(boundary currents)} \\
\ket{\phi}
&\=
\ket{\phi_1 \phi_2 \phi_3 \phi_4}
,\quad \!
\ket{\phi}'
\=
\ket{\phi_1 \phi_2 \phi_3}
\quad \text{(node potentials)}
\end{align}
\end{subequations}
The impedance matrices are
\begin{align}
Z 
\=
\begin{bmatrix*}[r]
Z_{1} & 0 & 0\\
0 & Z_{2} & 0\\
0 & 0 & Z_3
\end{bmatrix*}
,
\quad 
Z' 
\=
\begin{bmatrix*}[r]
Z'_{3} & 0 & 0\\
0 & Z'_{1} & 0\\
0 & 0 & Z'_2
\end{bmatrix*}
\end{align}
The admittances $Y_n$ and $Y'_n$ are defined by
\begin{align}
Y_n Z_n = 1 = Y'_n Z'_n
,
\quad
n=1,2,3
\end{align}
and the admittance matrices are
\begin{align}
Y
\=
Z^{-1}
=
\begin{bmatrix*}[r]
Y_{1} & 0 & 0\\
0 & Y_{2} & 0\\
0 & 0 & Y_3
\end{bmatrix*}
,
\quad 
Y'
\=
Z'^{-1}
=
\begin{bmatrix*}[r]
Y'_{3} & 0 & 0\\
0 & Y'_{1} & 0\\
0 & 0 & Y'_2
\end{bmatrix*}
\end{align}
  
\section{Star}

Consider the star circuit represented on Fig. \ref{fig_star_triangle}. Define
\begin{itemize}
\itemsep-2pt
\item
\emph{Node space} $C_0\=\braket{v_1 v_2 v_3 v_4}_\ZZ$, \quad $\dim C_0=4$
\item
\emph{Branch space} $C_1\=\braket{e_1 e_2 e_3}_\ZZ$, \quad $\dim C_1=3$
\end{itemize}
First construct the boundary operator $\p : C_1\to C_0$. By definition,
\begin{subequations}
\begin{align}
\p  e_1 = \p (v_1 v_4) \= v_4-v_1 =: \ket{-1; 0 ;0;1}\\
\p  e_2 = \p (v_2 v_4) \= v_4-v_2 =: \ket{0;-1; 0;1}\\
\p  e_3 = \p (v_3 v_4)\= v_4-v_3 =: \ket{ 0 ;0;-1;1}
\end{align}
\end{subequations}
and in the matrix representation we have 
\begin{align}
\p  =
\begin{bmatrix*}[r]
-1 & 0 & 0\\
0  & -1 & 0\\
0 & 0 & -1\\
1 & 1& 1
\end{bmatrix*}
\quad\Longrightarrow\quad
\pT   =
\begin{bmatrix*}[r]
-1 & 0 & 0 & 1\\
0 & -1 & 0 & 1\\
0 & 0& -1& 1
\end{bmatrix*}
\end{align}
Now it is easy to calculate the Laplacian as follows:
\begin{subequations}
\begin{align}
\Delta
&\=
\p  Y\pT  
\\
&\hphantom{:}=
\begin{bmatrix*}[r]
-1 & 0 & 0\\
0  & -1 & 0\\
0 & 0 & -1\\
1 & 1& 1
\end{bmatrix*}
\begin{bmatrix*}[r]
Y_{1} & 0 & 0\\
0 & Y_{2} & 0\\
0 & 0 & Y_{3}
\end{bmatrix*}
\begin{bmatrix*}[r]
-1 & 0 & 0 & 1\\
0 & -1 & 0 & 1\\
0 & 0& -1  & 1
\end{bmatrix*} 
\\
&\hphantom{:}= 
\begin{bmatrix*}[r]
Y_1 & 0 & 0 & -Y_1 \\
0 & Y_2 & 0 & -Y_2 \\
0 & 0 & Y_3 & -Y_3 \\
-Y_1 & -Y_2 & -Y_3 & Y_1+Y_2+Y_3
\end{bmatrix*}
\end{align}
\end{subequations}
The Poisson equation 
\begin{align}
\Delta\ket{\phi}
=-\ket{\be}
\end{align}
in coordinate form reads 
\begin{equation}
\begin{cases}
\be_1=\dfrac{-\phi_{1}+\phi_{4}}{Z_{1}}\\[1em]
\be_2=\dfrac{-\phi_{2}+\phi_{4}}{Z_{2}}\\[1em]
\be_3=\dfrac{-\phi_{3}+\phi_{4}}{Z_{3}}\\[1em]
\be_4=-\left(\dfrac{-\phi_{1}+\phi_{4}}{Z_{1}}
+\dfrac{-\phi_{2}+\phi_{4}}{Z_{2}}
+\dfrac{-\phi_{3}+\phi_{4}}{Z_{3}}\right)
\end{cases}
\end{equation}
We can easily check consistency:
\begin{align}
\be_1+\be_2+\be_3+\be_4=0
\end{align}
For $\be_4=0$ we have
\begin{align}
\dfrac{-\phi_{1}+\phi_{4}}{Z_{1}}+
\dfrac{-\phi_{2}+\phi_{4}}{Z_{2}}+
\dfrac{-\phi_{3}+\phi_{4}}{Z_{3}}=0
\end{align}
from which it follows that
\begin{align}
\phi_{4}
= \dfrac{\phi_{1}Y_{1}+\phi_{2}Y_{2}+\phi_{3}Y_{3}}{Y_{1}+Y_{2}+Y_{3}}
\label{neutral}
\end{align}

\section{Triangle}

Next consider the the triangle circuit on Fig. \ref{fig_star_triangle}.
We denote the spanning nodes and branches by the same letters. Then the linear spans are 
\begin{itemize}
\itemsep-2pt
\item
\emph{Node space} $C_0\=\braket{v_1 v_2 v_3}_\ZZ,\quad\dim C_0=3$
\item
\emph{Branch space} $C_1\=\braket{e_1 e_2 e_3}_\ZZ,\quad\dim C_1=3$
\end{itemize}
Construct the boundary operator  $\p : C_1\to C_0$.
We can see that 
\begin{align}
\p  e_1 = \p (v_1 v_2) \= v_2-v_1 =: \ket{-1; 1 ;0}\\
\p  e_2 = \p (v_2 v_3) \= v_3-v_2 =: \ket{0;-1;1}\\
\p  e_3 = \p (v_3 v_1) \= v_1-v_3 =: \ket{ 1;0;-1}
\end{align}
and the matrix representation is 
\begin{align}
\p  =
\begin{bmatrix*}[r]
-1 & 0 & 1\\
1  & -1 & 0\\
0 & 1 & -1
\end{bmatrix*}
\quad\Longrightarrow\quad
\pT   =
\begin{bmatrix*}[r]
-1 & 1 & 0\\
0 & -1 & 1\\
1 & 0& -1
\end{bmatrix*}
\end{align}
The Laplacian is
\begin{subequations}
\begin{align}
\Delta' 
&\=
\p  Y' \pT   
\\
&\hphantom{:}=
\begin{bmatrix*}[r]
-1 & 0 & 1\\
1  & -1 & 0\\
0 & 1 & -1
\end{bmatrix*}
\begin{bmatrix*}[r]
Y'_{3} & 0 & 0\\
0 & Y'_{1} & 0\\
0 & 0 & Y'_{2}
\end{bmatrix*}
\begin{bmatrix*}[r]
-1 & 1 & 0\\
0 & -1 & 1\\
1 & 0& -1
\end{bmatrix*} 
\\
&\hphantom{:}=
\begin{bmatrix*}[r]
Y'_3+Y'_2 & - Y'_3 & -Y_2 \\
-Y'_3 & Y'_3+Y'_1 & -Y'_1\\
-Y'_2 & -Y'_1 & Y'_1+Y'_2
\end{bmatrix*}
\end{align}
\end{subequations}
The Poisson equation is 
\begin{align}
\Delta'\ket{\phi}'
=-\ket{\be}'
\end{align}
Hence we have 
\begin{equation}
\begin{cases}
\be_1=\hphantom{-}\dfrac{-\phi_{1}+\phi_{2}}{Z'_{3}}-\dfrac{\phi_{1}-\phi_{3}}{Z'_{2}}\\[1em]
\be_2=-\dfrac{-\phi_{1}+\phi_{2}}{Z'_{3}}+\dfrac{-\phi_{2}+\phi_{3}}{Z'_{1}}\\[1em]
\be_3=-\dfrac{-\phi_{2}+\phi_{3}}{Z'_{1}}+\dfrac{\phi_{1}-\phi_{3}}{Z'_{2}}
\end{cases}
\end{equation}
Check the consistency:
\begin{align}
\be_{1}+\be_{2}+\be_{3}=0
\end{align}

\section{Equivalence}

Now consider the star-triangle equivalence as exposed on Fig. \ref{fig_star_triangle} and prove the Kennelly theorem.

\begin{thm}[A. E. Kennelly \cite{Kennelly}]
If the (external/boundary) equivalence presented on Fig.~\ref{fig_star_triangle} holds, then one has
\begin{align}
\shabox{$Z_{n}Z'_{n}
=
Z'_{1}Z'_{2}Z'_{3}/(Z'_{1}+Z'_{2}+Z'_{3})
=
Z_{1}Z_{2}+Z_{2}Z_{3}+Z_{3}Z_{1},\quad n=1,2,3$}
\end{align}
\end{thm}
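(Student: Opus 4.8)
The plan is to impose the external (boundary) equivalence directly on the two Poisson systems obtained in Sections~2 and~3: for identical boundary potentials $\phi_1,\phi_2,\phi_3$ the two circuits must deliver identical boundary currents $\be_1,\be_2,\be_3$. Because the equivalence is purely external, no current is injected at the interior star node, so $\be_4=0$ and the neutral-point relation \eqref{neutral} expresses $\phi_4$ through $\phi_1,\phi_2,\phi_3$.

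First I would substitute \eqref{neutral} into the star equations $\be_n=Y_n(\phi_4-\phi_n)$. Abbreviating $S\=Y_1+Y_2+Y_3$, a short computation recasts the first star current in pure potential-difference form as
\begin{align}
\be_1
=
\frac{Y_1Y_2}{S}(\phi_2-\phi_1)
+
\frac{Y_1Y_3}{S}(\phi_3-\phi_1),
\end{align}
and $\be_2,\be_3$ follow cyclically.

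Next I would rewrite the triangle system of Section~3 in the same form, its first equation reading $\be_1=Y'_3(\phi_2-\phi_1)+Y'_2(\phi_3-\phi_1)$, and likewise for $\be_2,\be_3$. Comparing the two expressions for each boundary current and matching the coefficients of the independent differences $(\phi_i-\phi_j)$ forces the admittance relations
\begin{align}
Y'_1=\frac{Y_2Y_3}{S},\qquad
Y'_2=\frac{Y_1Y_3}{S},\qquad
Y'_3=\frac{Y_1Y_2}{S}.
\end{align}

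Finally I would translate these into the impedance form asserted by the theorem. Clearing denominators in $S\,Y'_1=Y_2Y_3$ and using $Y_nZ_n=1=Y'_nZ'_n$ gives $Z_1Z'_1=Z_1Z_2+Z_2Z_3+Z_3Z_1$, and cyclically each $Z_nZ'_n$ equals this same symmetric quantity, call it $P$. To recover the middle member of the chain I would write $Z'_n=P/Z_n$ and compute $Z'_1Z'_2Z'_3=P^3/(Z_1Z_2Z_3)$ together with $Z'_1+Z'_2+Z'_3=P^2/(Z_1Z_2Z_3)$, so that their ratio collapses to $P$ as well. The work here is bookkeeping rather than conceptual: each relation $Y'_n=\cdots$ is overdetermined, surfacing in two of the three boundary equations, so one must verify that the coefficient matching is mutually consistent; and the passage between the admittance and impedance forms calls for care in clearing the nested fractions.
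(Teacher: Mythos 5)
Your proof is correct, and it rests on the same governing principle as the paper's: external equivalence means the star and the triangle realize the same linear map from the boundary potentials $\phi_1,\phi_2,\phi_3$ to the boundary currents $\be_1,\be_2,\be_3$, so the coefficients of that map must coincide. The execution, however, is genuinely different. The paper stays in impedance variables throughout: it multiplies each current balance by $Z_n$, eliminates $\phi_4$ by subtracting the equations pairwise, and then evaluates the resulting identities at the basis choices $\phi_n=1$ (others zero) to read off $Z_1Z'_1=Z_2Z'_2=Z_3Z'_3$ and $Z_n=Z'_iZ'_j/(Z'_1+Z'_2+Z'_3)$, i.e.\ it lands directly on the triangle-to-star formulas. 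You instead substitute the neutral-point relation \eqref{neutral} into the star system, which packages $\be_1$ as $\tfrac{Y_1Y_2}{S}(\phi_2-\phi_1)+\tfrac{Y_1Y_3}{S}(\phi_3-\phi_1)$ with $S=Y_1+Y_2+Y_3$ (your computation here is right: $\phi_4-\phi_1=\bigl(Y_2(\phi_2-\phi_1)+Y_3(\phi_3-\phi_1)\bigr)/S$), and then match coefficients against the triangle form $\be_1=Y'_3(\phi_2-\phi_1)+Y'_2(\phi_3-\phi_1)$ to obtain the dual star-to-triangle admittance formulas $Y'_k=Y_iY_j/S$; the impedance statement follows by clearing denominators, and your conversions ($SY'_1=Y_2Y_3\Rightarrow Z_1Z'_1=Z_1Z_2+Z_2Z_3+Z_3Z_1$, and $Z'_1Z'_2Z'_3/(Z'_1+Z'_2+Z'_3)=P^3/P^2\cdot 1=P$) check out. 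Your route buys two things the paper leaves implicit: the standard admittance-form transformation appears as an intermediate result, and you actually verify the middle and right-hand members of the boxed chain, whereas the paper stops at $Z_n=Z'_iZ'_j/\sum Z'$ and never exhibits the identification with $Z_1Z_2+Z_2Z_3+Z_3Z_1$. The paper's route avoids introducing $S$ and the nested fractions. Your closing caveat about overdetermination is the right thing to flag, and it does resolve favourably: each $Y'_k$ is produced by two of the three current equations with the same value, mirroring the paper's remark that the remaining variations of the $\phi_n$ yield no additional constraints.
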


\begin{proof}

As soon as the boundary currents on Fig. \ref{fig_star_triangle} are considered the same, then we have
\begin{subequations}
\begin{align}
\dfrac{-\phi_{1}+\phi_{4}}{Z_{1}}
&=
\hphantom{+}\dfrac{-\phi_{1}+\phi_{2}}{Z'_{3}}
 -\dfrac{\hphantom{-}\phi_{1}-\phi_{3}}{Z'_{2}} \quad 
 | \cdot Z_{1}
\\
\dfrac{-\phi_{2}+\phi_{4}}{Z_{2}}
&=
-\dfrac{-\phi_{1}+\phi_{2}}{Z'_{3}}
+\dfrac{-\phi_{2}+\phi_{3}}{Z'_{1}} \quad
 | \cdot Z_{2}
\\
\dfrac{-\phi_{3}+\phi_{4}}{Z_{3}}
&=
-\dfrac{-\phi_{2}+\phi_{3}}{Z'_{1}}
+\dfrac{\hphantom{-}\phi_{1}-\phi_{3}}{Z'_{2}} \quad
 | \cdot Z_{3}
\end{align}
\end{subequations}
where $\phi_4$ is given by \eqref{neutral}. 
Hence we obtain equations for the potentials $\phi_1,\phi_2,\phi_3$,  
\begin{subequations}
\begin{align}
-\phi_{1}+\phi_{4}
&=
\phantom{-}(-\phi_{1}+\phi_{2})\dfrac{Z_{1}}{Z'_{3}}
-(\hphantom{-}\phi_{1}-\phi_{3})\dfrac{Z_{1}}{Z'_{2}}
\\
-\phi_{2}+\phi_{4}
&=-
(-\phi_{1}+\phi_{2})\dfrac{Z_{2}}{Z'_{3}}
+(-\phi_{2}+\phi_{3})\dfrac{Z_{2}}{Z'_{1}}
\\
-\phi_{3}+\phi_{4}
&=-
(-\phi_{2}+\phi_{3})\dfrac{Z_{3}}{Z'_{1}}
+(\hphantom{-}\phi_{1}-\phi_{3})\dfrac{Z_{3}}{Z'_{2}}
\end{align}
\end{subequations}
By eliminating here the potential $\phi_{4}$, we get relations for the boundary potentials, 
\begin{subequations}
\begin{align}
-\phi_{1}+\phi_{2}
&=
\hphantom{-}(-\phi_{1}+\phi_{2})\dfrac{Z_{1}}{Z'_{3}}
-(\hphantom{-}\phi_{1}-\phi_{3})\dfrac{Z_{1}}{Z'_{2}}
+(-\phi_{1}+\phi_{2})\dfrac{Z_{2}}{Z'_{3}}
-(-\phi_{2}+\phi_{3})\dfrac{Z_{2}}{Z'_{1}}
\\
-\phi_{2}+\phi_{3}
&=
-(-\phi_{1}+\phi_{2})\dfrac{Z_{2}}{Z'_{3}}
+(-\phi_{2}+\phi_{3})\dfrac{Z_{2}}{Z'_{1}}
+(-\phi_{2}+\phi_{3})\dfrac{Z_{3}}{Z'_{1}}
-(\hphantom{-}\phi_{1}-\phi_{3})\dfrac{Z_{3}}{Z'_{2}}
\\
-\phi_{3}+\phi_{1}
&=
-(-\phi_{2}+\phi_{3})\dfrac{Z_{3}}{Z'_{1}}
+(\hphantom{-}\phi_{1}-\phi_{3})\dfrac{Z_{3}}{Z'_{2}}
-(-\phi_{1}+\phi_{2})\dfrac{Z_{1}}{Z'_{3}}
+(-\phi_{1}+\phi_{3})\dfrac{Z_{1}}{Z'_{2}}
\end{align}
\end{subequations}
We can easily check consistency of the last Eqs, by summing these we easily obtain $0=0$.
This means that one equation is a linear combination of others and we can variate the independent potentials $\phi_1,\phi_1,\phi_3$ only in two equations. We use the first two Eqs.

By variating  the independent potentials $\phi_1,\phi_1,\phi_3$ and setting  the nontrivial potential  
$\phi_3=1$ in the first equation we obtain
\begin{align}
0
=
\dfrac{Z_{1}}{Z'_{2}}-\dfrac{Z_{2}}{Z'_{1}}
\quad\Longrightarrow\quad 
\shabox{$Z_{1}Z'_{1}=Z_{2}Z'_{2}$}
\end{align}
Now take $\phi_1=1$,
\begin{subequations}
\begin{align}
1=\dfrac{Z_{1}}{Z'_{3}}
+\dfrac{Z_{1}}{Z'_{2}}
+\dfrac{Z_{2}}{Z'_{3}}
\quad\Longrightarrow\quad 
1
&=\dfrac{Z_{1}Z'_{2}+Z_{1}Z'_{3}+Z_{2}Z'_{2}}{Z'_{2}Z'_{3}}\\
&=\dfrac{Z_{1}Z'_{2}+Z_{1}Z'_{3}+Z_{1}Z'_{1}}{Z'_{2}Z'_{3}}\\
&=\dfrac{Z_{1}(Z'_{2}+Z'_{3}+Z'_{1})}{Z'_{2}Z'_{3}}
\quad\Longrightarrow\quad
\shabox{$Z_{1}=\dfrac{Z'_{2}Z'_{3}}{Z'_{2}+Z'_{3}+Z'_{1}}$}
\end{align}
\end{subequations}
Next take $\phi_2=1$, 
\begin{subequations}
\begin{align}
1=\dfrac{Z_{1}}{Z'_{3}}+\dfrac{Z_{2}}{Z'_{3}}+\dfrac{Z_{2}}{Z'_{1}}
\quad\Longrightarrow\quad
1
&=\dfrac{Z_{1}Z'_{1}+Z_{2}Z'_{1}+Z_{2}Z'_{3}}{Z'_{3}Z'_{1}}\\
&=\dfrac{Z_{2}Z'_{2}+Z_{2}Z'_{1}+Z_{2}Z'_{3}}{Z'_{3}Z'_{1}}\\
&=\dfrac{Z_{2}(Z'_{2}+Z'_{1}+Z'_{3})}{Z'_{3}Z'_{1}}
\quad\Longrightarrow\quad
\shabox{$Z_{2}=\dfrac{Z'_{1}Z'_{3}}{Z'_{2}+Z'_{1}+Z'_{3}}$}
\end{align}
\end{subequations}

By variating  the independent potentials $\phi_1,\phi_1,\phi_3$ in the second equation and setting there the nontrivial potential  $\phi_1=1$, we obtain
\begin{align}
0=\dfrac{Z_{2}}{Z'_{3}}-\dfrac{Z_{3}}{Z'_{2}}
\quad\Longrightarrow\quad 
\shabox{$Z_{2}Z'_{2}=Z_{3}Z'_{3}$}
\end{align}
By setting $\phi_3=1$, we obtain
\begin{subequations}
\begin{align}
1=\dfrac{Z_{2}}{Z'_{1}}+\dfrac{Z_{3}}{Z'_{1}}+\dfrac{Z_{3}}{Z'_{2}}
\quad\Longrightarrow\quad 
1
&=\dfrac{Z_{2}Z'_{2}+Z_{3}Z'_{2}+Z_{3}Z'_{1}}{Z'_{1}Z'_{2}}\\
&=\dfrac{Z_{3}Z'_{3}+Z_{3}Z'_{2}+Z_{3}Z'_{1}}{Z'_{1}Z'_{2}}\\
&=\dfrac{Z_{3}(Z'_{3}+Z'_{2}+Z'_{1})}{Z'_{1}Z'_{2}}
\quad\Longrightarrow\quad
\shabox{$Z_{3}=\dfrac{Z'_{1}Z'_{2}}{Z'_{3}+Z'_{2}+Z'_{1}}$}
\end{align}
\end{subequations}
One can easily check that other variations of the potentials $\phi_n$ $(n=1,2,3)$ do not produce additional constraints.
\end{proof}

{\small The research was in part supported by the Estonian Research Council, grant ETF-9038.}

\medskip
{\small
Tallinn University of Technology, Ehitajate tee 5, 19086 Tallinn, Estonia
}

\end{document}